\newcommand\EFFACE[1]{}
\newtheorem{theorem}{Theorem}
\newtheorem{proposition}[theorem]{Proposition}
\newtheorem{corollary}[theorem]{Corollary}
\newtheorem{conjecture}[theorem]{Conjecture}
\newenvironment{proof}{
\par
\noindent {\bf Proof.}\rm}{\mbox{}\hfill$\square$\par\vskip 3mm}
\def\NDI{{\rm ndi}}
\def\UND{{\rm und}}
\begin{document}

\title{{\bf A note on the neighbour-distinguishing\\ index of digraphs}}

\author{\'Eric Sopena~\thanks{Univ. Bordeaux, CNRS, Bordeaux INP, LaBRI, UMR5800, F-33400 Talence, France.}
\and
Mariusz Wo\'zniak~\thanks{AGH University of Science and Technology, al. A. Mickiewicza 30, 30-059 Krakow, Poland.}
}

\maketitle

\abstract{
In this note, we introduce and study a new version of neighbour-distinguishing arc-colourings of digraphs.
An arc-colouring $\gamma$ of a digraph $D$ is proper if no two arcs with the same head or
with the same tail are assigned the same colour. 
For each vertex $u$ of $D$, we denote by $S_\gamma^-(u)$ and $S_\gamma^+(u)$ the sets of colours that
appear on the incoming arcs and on the outgoing arcs of $u$, respectively.
An arc colouring $\gamma$ of $D$ is \emph{neighbour-distinguishing} 
if, for every two adjacent vertices $u$ and $v$ of $D$, 
the ordered pairs $(S_\gamma^-(u),S_\gamma^+(u))$ and $(S_\gamma^-(v),S_\gamma^+(v))$
are distinct.
The neighbour-distinguishing index of $D$ is then the smallest number of colours needed
for a neighbour-distinguishing arc-colouring of $D$.

We prove upper bounds on the neighbour-distingui\-shing index of various classes of digraphs.
}

\bigskip

\noindent
{\bf Keywords:} Digraph; Arc-colouring; 
Neighbour-distinguishing arc-colouring.

\medskip

\noindent
{\bf MSC 2010:} 05C15, 05C20.

\section{Introduction}
\label{sec:intro}

A proper edge-colouring of a graph $G$ is \emph{vertex-distinguishing} 
if, for every two vertices $u$ and $v$ of $G$, the sets of colours that
appear on the edges incident with $u$ and $v$ are distinct.
Vertex-distinguishing proper edge-colourings of graphs were independently
introduced by 
Burris and Schelp~\cite{BS97}, 
and by \v Cerny, Hor\v n\'ak and Sot\'ak~\cite{CHS96}.
Requiring only adjacent vertices to be distinguished led to the notion
of \emph{neighbour-distinguishing} edge-colourings,
considered in~\cite{BGLS07,EHW06,ZLW02}.

Vertex-distinguishing arc-colourings of digraphs have been recently introduced
and studied by Li, Bai, He and Sun~\cite{LBHS16}.
An arc-colouring of a digraph is proper if no two arcs with the same head or
with the same tail are assigned the same colour.
Such an arc-colouring is \emph{vertex-distinguishing} 
if, for every two vertices $u$ and $v$ of $G$, 
(i)~the sets $S^-(u)$ and $S^-(v)$ of colours that
appear on the incoming arcs of $u$ and $v$, respectively, are distinct,
and (ii)~the sets $S^+(u)$ and $S^+(v)$ of colours that
appear on the outgoing arcs of $u$ and $v$, respectively, are distinct.

In this paper, we introduce and study a neighbour-distinguishing 
version of arc-colourings of digraphs,
using a slightly different distinction criteria:
two neighbours $u$ and $v$ are distinguished 
whenever $S^-(u)\neq S^-(v)$ or $S^+(u)\neq S^+(v)$.

Definitions and notation are introduced in the next section.
We prove a general upper bound on the neighbour-distinguishing
index of a digraph in Section~\ref{sec:general}, and study
various classes of digraphs in Section~\ref{sec:classes}.
Concluding remarks are given in Section~\ref{sec:discussion}.

\section{Definitions and notation}
\label{sec:definitions}

All digraphs we consider are without loops and multiple arcs.
For a digraph $D$, we denote by $V(D)$ and $A(D)$ its sets of vertices and arcs, respectively.
The {\em underlying graph} of $D$, denoted $\UND(D)$, is the simple undirected graph obtained from $D$
by replacing each arc $uv$ (or each pair of arcs $uv$, $vu$) by the edge $uv$.

If $uv$ is an arc of a digraph $D$, $u$ is the {\em tail} and $v$ is the {\em head} of~$uv$.
For every vertex $u$ of $D$, we denote by $N_D^+(u)$ and $N_D^-(u)$ the sets of {\em out-neighbours}
and {\em in-neighbours} of $u$, respectively.
%
Moreover, we denote by $d_D^+(u)=|N_D^+(u)|$ and $d_D^-(u)=|N_D^-(u)|$ the \emph{outdegree} and \emph{indegree} of $u$, 
respectively, and by $d_D(u)=d_D^+(u)+d_D^-(u)$ the \emph{degree} of $u$.

For a digraph $D$, we denote by 
$\delta^+(D)$, $\delta^-(D)$, $\Delta^+(D)$ and $\Delta^-(D)$
the minimum outdegree, minimum indegree, maximum outdegree and maximum indegree of $D$, respectively.
Moreover, we let
$$\Delta^*(D)=\max\{\Delta^+(D),\ \Delta^-(D)\}.$$


\medskip

A (proper) {\em $k$-arc-colouring} of a digraph $D$ is a mapping $\gamma$ from $V(D)$ 
to a set of $k$ colours (usually $\{1,\dots,k\}$) such that, for every vertex $u$, 
(i)~any two arcs with head $u$ are assigned distinct colours, and
(ii)~any two arcs with tail $u$ are assigned distinct colours.
Note here that two consecutive arcs $vu$ and $uw$, $v$ and $w$ not necessarily
distinct, may be assigned the same colour.
The {\em chromatic index} $\chi'(D)$ of a digraph $D$ is then the smallest number
$k$ for which $D$ admits a $k$-arc-colouring.

The following fact is well-known (see e.g.~\cite{LBHS16,W96,Z06}).

\begin{proposition}
For every digraph $D$, $\chi'(D)=\Delta^*(D)$.
\label{prop:arc-colouring}
\end{proposition}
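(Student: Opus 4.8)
The plan is to establish the two inequalities $\chi'(D)\ge\Delta^*(D)$ and $\chi'(D)\le\Delta^*(D)$ separately. The lower bound is immediate from the definition: if $u$ is a vertex of $D$ with $d_D^+(u)=\Delta^+(D)$, then in any arc-colouring of $D$ the $\Delta^+(D)$ arcs with tail $u$ must receive pairwise distinct colours by condition~(ii), so at least $\Delta^+(D)$ colours are needed; symmetrically, considering a vertex of indegree $\Delta^-(D)$ and condition~(i), at least $\Delta^-(D)$ colours are needed. Hence $\chi'(D)\ge\max\{\Delta^+(D),\Delta^-(D)\}=\Delta^*(D)$.

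For the upper bound, I would translate the problem into an edge-colouring problem on a bipartite graph. Associate with $D$ the bipartite graph $B(D)$ with vertex set $\{u^+:u\in V(D)\}\cup\{u^-:u\in V(D)\}$ and edge set $\{u^+v^-:uv\in A(D)\}$. Since $D$ has no loops and no multiple arcs, $B(D)$ is a simple bipartite graph, and there is a bijection between $A(D)$ and the edge set of $B(D)$ under which the arcs with tail $u$ correspond exactly to the edges incident with $u^+$, and the arcs with head $u$ correspond exactly to the edges incident with $u^-$. Consequently, proper $k$-arc-colourings of $D$ are in one-to-one correspondence with proper $k$-edge-colourings of $B(D)$, so $\chi'(D)=\chi'(B(D))$. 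Moreover, $\deg_{B(D)}(u^+)=d_D^+(u)$ and $\deg_{B(D)}(u^-)=d_D^-(u)$ for every $u$, so the maximum degree of $B(D)$ is precisely $\Delta^*(D)$.

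It then remains to invoke König's edge-colouring theorem: every bipartite graph $H$ satisfies $\chi'(H)=\Delta(H)$. Applying this to $H=B(D)$ yields $\chi'(D)=\chi'(B(D))=\Delta(B(D))=\Delta^*(D)$, which together with the lower bound completes the proof. I do not expect any genuine obstacle here: the bijection between arc-colourings of $D$ and edge-colourings of $B(D)$ is routine, and the only non-elementary ingredient is König's theorem, which I would simply cite. (Alternatively, one could prove the upper bound directly by a greedy argument with alternating-path recolourings, essentially re-deriving König's theorem, but the bipartite reduction is cleaner and matches the references given.)
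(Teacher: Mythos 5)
Your argument is correct: the reduction to the bipartite graph $B(D)$ with parts $\{u^+\}$ and $\{u^-\}$ followed by K\H{o}nig's edge-colouring theorem is exactly the standard proof of this fact, which the paper does not prove itself but simply cites as well known (it is the argument found in the cited references). No gap to report; the only inessential remark is that multiple arcs, not loops, are what would make $B(D)$ a multigraph, and K\H{o}nig's theorem would still apply in that case.
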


For every vertex $u$ of a digraph $D$, and every arc-colouring $\gamma$ of $D$, 
we denote by $S_\gamma^+(u)$ and $S_\gamma^-(u)$
the sets of colours assigned by $\gamma$ to the outgoing and incoming arcs of $u$, respectively.
From the definition of an arc-colouring, we get $d_D^+(u)=|S_\gamma^+(u)|$ and $d_D^-(u)=|S_\gamma^-(u)|$ for
every vertex $u$.

We say that two vertices $u$ and $v$ of a digraph $D$
are {\em distinguished} by an arc-colouring $\gamma$ of $D$, if
$(S^+_\gamma(u),S^-_\gamma(u))\neq (S^+_\gamma(v),S^-_\gamma(v))$.
Note that we consider here ordered pairs, so that $(A,B)\neq (B,A)$ whenever $A\neq B$.
Note also that if $u$ and $v$ are such that $d_D^+(u)\neq d_D^+(v)$ or $d_D^-(u)\neq d_D^-(v)$,
which happens in particular if $d_D(u)\neq d_D(v)$,
then they are distinguished by every arc-colouring of $D$.
We will write $u\nsim_\gamma v$ if $u$ and $v$ are distinguished by $\gamma$
and $u\sim_\gamma v$ otherwise.



A $k$-arc-colouring $\gamma$ of a digraph $D$ is {\em neighbour-distinguishing} 
if $u\nsim_\gamma v$ for every arc $uv\in A(D)$. 
Such an arc-colouring will be called an {\em nd-arc-colouring} for short.
The {\em neighbour-distinguishing index} $\NDI(D)$ of a digraph $D$ is then the smallest number
of colours required for an nd-arc-colouring of $D$.

The following lower bound is easy to establish.

\begin{proposition}
For every digraph $D$, $\NDI(D)\geq\chi'(D)=\Delta^*(D)$.
Moreover, if there are two vertices $u$ and $v$ in $D$ 
with $d_D^+(u)=d_D^+(v)=d_D^-(u)=d_D^-(v)=\Delta^*(D)$,
then $\NDI(D)\geq\Delta^*(D)+1$.
\label{prop:lower-bound}
\end{proposition}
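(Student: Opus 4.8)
The first inequality requires no work: by definition every nd-arc-colouring of $D$ is in particular a proper arc-colouring of $D$, so it uses at least $\chi'(D)$ colours, and $\chi'(D)=\Delta^*(D)$ by Proposition~\ref{prop:arc-colouring}. Hence $\NDI(D)\ge\Delta^*(D)$.

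For the second assertion the plan is a short proof by contradiction. Suppose $\gamma$ is an nd-arc-colouring of $D$ using only $k=\Delta^*(D)$ colours, with colour set $C=\{1,\dots,k\}$. The key observation is that maximum-degree vertices have no freedom left in their colour sets: if a vertex $w$ satisfies $d_D^+(w)=\Delta^*(D)$, then its $\Delta^*(D)$ outgoing arcs carry pairwise distinct colours (since $\gamma$ is proper), and as only $|C|=\Delta^*(D)$ colours are available this forces $S_\gamma^+(w)=C$; symmetrically, $d_D^-(w)=\Delta^*(D)$ forces $S_\gamma^-(w)=C$. Applying this to the two vertices $u$ and $v$ of the statement, all four sets $S_\gamma^+(u), S_\gamma^-(u), S_\gamma^+(v), S_\gamma^-(v)$ equal $C$, whence $(S_\gamma^+(u),S_\gamma^-(u))=(C,C)=(S_\gamma^+(v),S_\gamma^-(v))$, that is, $u\sim_\gamma v$. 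Since $u$ and $v$ are adjacent, this contradicts the assumption that $\gamma$ is neighbour-distinguishing, and therefore $\NDI(D)\ge\Delta^*(D)+1$.

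I do not expect any genuine obstacle here: the proposition is an immediate consequence of Proposition~\ref{prop:arc-colouring} together with a short counting remark. The only point worth stating carefully is that, once exactly $\Delta^*(D)$ colours are in use, a vertex of maximum out-degree (resp.\ maximum in-degree) is forced to realise \emph{every} available colour on its outgoing (resp.\ incoming) arcs; everything else is bookkeeping.
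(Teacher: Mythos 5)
Your proof is correct and takes essentially the same route as the paper: in any arc-colouring with only $\Delta^*(D)$ colours, a vertex of full in- and out-degree is forced to satisfy $S_\gamma^+=S_\gamma^-=\{1,\dots,\Delta^*(D)\}$, so $u$ and $v$ cannot be distinguished. Note that, exactly like the paper's own proof, your contradiction step uses that $u$ and $v$ are adjacent, an assumption the statement leaves implicit; this is a (shared) imprecision of the statement rather than a gap in your argument.
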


\begin{proof}
The first statement follows from the definitions.
For the second statement, observe that
$S_\gamma^+(u)=S_\gamma^+(v)=S_\gamma^-(u)=S_\gamma^-(v)=\{1,\dots,\Delta^*(D)\}$
for any two such vertices $u$ and $v$ and any $\Delta^*(D)$-arc-colouring
$\gamma$ of $D$.
\end{proof}

\section{A general upper bound}
\label{sec:general}

If $D$ is an \emph{oriented graph}, that is, a digraph with no opposite arcs,
then every proper edge-colouring $\varphi$ of $\UND(D)$ is an nd-arc-colouring of $D$ since,
for every arc $uv$ in $D$, $\varphi(uv)\in S_\varphi^+(u)$ and $\varphi(uv)\notin S_\varphi^+(v)$,
which implies $u\nsim_\varphi v$.
Hence, we get the following upper bound for oriented graphs, thanks to classical Vizing's bound.

\begin{proposition}\label{prop:oriented-graphs}
If $D$ is an oriented graph, then 
$$\NDI(D)\le\chi'(\UND(D))\le\Delta(\UND(D))+1\le 2\Delta^*(D)+2.$$
\end{proposition}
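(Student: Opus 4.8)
The plan is to verify the three inequalities in turn, each of which is either an instance of a standard result or a direct degree computation. The first inequality, $\NDI(D)\le\chi'(\UND(D))$, is immediate from the observation made just above the statement: if $D$ is an oriented graph, then any proper edge-colouring $\varphi$ of $\UND(D)$, viewed as an arc-colouring of $D$, is already neighbour-distinguishing, because for each arc $uv$ we have $\varphi(uv)\in S_\varphi^+(u)$ while $\varphi(uv)\notin S_\varphi^+(v)$ (there is no arc $vu$ to carry that colour out of $v$), so $S_\varphi^+(u)\ne S_\varphi^+(v)$ and hence $u\nsim_\varphi v$. Taking $\varphi$ to be an optimal proper edge-colouring of $\UND(D)$ therefore gives an nd-arc-colouring of $D$ with $\chi'(\UND(D))$ colours.

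The second inequality, $\chi'(\UND(D))\le\Delta(\UND(D))+1$, is exactly Vizing's theorem applied to the simple graph $\UND(D)$, so nothing needs to be done beyond citing it.

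For the third inequality I would argue at the level of vertex degrees. Since $D$ is an oriented graph, it has no pair of opposite arcs, so for every vertex $u$ the edge $uv$ of $\UND(D)$ arises from exactly one arc ($uv$ or $vu$) of $D$; consequently $d_{\UND(D)}(u)=d_D^+(u)+d_D^-(u)$. As $d_D^+(u)\le\Delta^+(D)\le\Delta^*(D)$ and $d_D^-(u)\le\Delta^-(D)\le\Delta^*(D)$, we get $d_{\UND(D)}(u)\le 2\Delta^*(D)$ for all $u$, whence $\Delta(\UND(D))\le 2\Delta^*(D)$ and $\Delta(\UND(D))+1\le 2\Delta^*(D)+1\le 2\Delta^*(D)+2$. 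Chaining the three inequalities yields the claim.

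There is essentially no hard step here: the only point that requires a moment's care is the degree identity $d_{\UND(D)}(u)=d_D(u)$, which uses the hypothesis that $D$ is oriented (for a general digraph a vertex incident to opposite arcs would have smaller degree in the underlying graph, but then the nd-arc-colouring argument of the first inequality also breaks down, so restricting to oriented graphs is what makes the whole chain go through).
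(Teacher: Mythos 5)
Your proof is correct and follows essentially the same route as the paper: the observation that any proper edge-colouring of $\UND(D)$ is neighbour-distinguishing for an oriented $D$ (since $\varphi(uv)\in S_\varphi^+(u)$ but $\varphi(uv)\notin S_\varphi^+(v)$), then Vizing's theorem, then the degree bound $\Delta(\UND(D))\le\Delta^+(D)+\Delta^-(D)\le 2\Delta^*(D)$. Your added detail on the degree computation is fine (and note the inequality $\Delta(\UND(D))\le 2\Delta^*(D)$ holds even with opposite arcs; orientedness is only needed for the first inequality).
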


However, a proper edge-colouring of $\UND(D)$ may produce an arc-colouring of $D$
which is not neighbour-distinguishing when $D$ contains opposite arcs.
Consider for instance the digraph $D$ given by 
$V(D)=\{a,b,c,d\}$ and $A(D)=\{ab,bc,cb,dc\}$.
We then have $\UND(D)=P_4$, the path of order~4, and thus $\chi'(\UND(D))=2$.
It is then not difficult to check that for any 2-edge-colouring $\varphi$ of $\UND(D)$,
$S_\varphi^+(b)=S_\varphi^+(c)$ and $S_\varphi^-(b)=S_\varphi^-(c)$.

\medskip

We will prove that the upper bound given in Proposition~\ref{prop:oriented-graphs}
can be decreased to $2\Delta^*(D)$, even when $D$ contains opposite arcs.
Recall that a digraph $D$ is \emph{$k$-regular} if $d_D^+(v)=d_D^-(v)=k$ for every
vertex $v$ of~$D$.
A \emph{$k$-factor} in a digraph $D$ is a spanning $k$-regular subdigraph of~$D$.
The following result is folklore.

\begin{theorem}
Every $k$-regular digraph can be decomposed into $k$ arc-disjoint $1$-factors.
\label{th:k-regular-digraph}
\end{theorem}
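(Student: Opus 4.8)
The plan is to reduce the decomposition of a $k$-regular digraph into $1$-factors to a classical theorem about bipartite graphs, namely that every $k$-regular bipartite (multi)graph has a perfect matching (indeed decomposes into $k$ perfect matchings, by repeated application of Hall's theorem or König's edge-colouring theorem). First I would build, from a $k$-regular digraph $D$ with vertex set $V(D)=\{v_1,\dots,v_n\}$, an auxiliary bipartite multigraph $B$ with parts $X=\{x_1,\dots,x_n\}$ and $Y=\{y_1,\dots,y_n\}$, placing an edge $x_iy_j$ in $B$ for each arc $v_iv_j$ in $A(D)$. Since $D$ has no multiple arcs, $B$ is in fact a simple bipartite graph, and the out-regularity $d_D^+(v_i)=k$ says $\deg_B(x_i)=k$ while the in-regularity $d_D^-(v_j)=k$ says $\deg_B(y_j)=k$; hence $B$ is $k$-regular bipartite.

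Next I would invoke the classical fact that a $k$-regular bipartite graph decomposes into $k$ perfect matchings $M_1,\dots,M_k$ (equivalently, $\chi'(B)=k$; this follows from König's theorem, or inductively from Hall's marriage theorem applied to the $k$-regular bipartite graph, which always satisfies Hall's condition). Each perfect matching $M_\ell$ corresponds, under the bijection $x_iy_j\leftrightarrow v_iv_j$, to a set $F_\ell$ of arcs of $D$ in which every vertex $v_i$ is the tail of exactly one arc (because $x_i$ is matched exactly once) and the head of exactly one arc (because $y_i$ is matched exactly once). Thus $F_\ell$ spans $D$ and satisfies $d_{F_\ell}^+(v)=d_{F_\ell}^-(v)=1$ for every vertex $v$, i.e.\ $F_\ell$ is a $1$-factor of $D$. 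Since the $M_\ell$ are arc-disjoint and together cover $A(B)$, the $F_\ell$ are arc-disjoint and together cover $A(D)$, giving the desired decomposition into $k$ arc-disjoint $1$-factors.

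One small point to watch is the correspondence between loops-free digraphs and the bipartite model: an arc $v_iv_j$ always gives an edge between the two different parts $X$ and $Y$ even when $i=j$ would be an issue, but since $D$ has no loops this never arises, and in any case an edge $x_iy_i$ is perfectly legitimate in $B$ and simply corresponds to a vertex $v_i$ lying on a loop of the $1$-factor $F_\ell$ — but again loops are excluded, so each $F_\ell$ is a genuine union of directed cycles. I would therefore also note, as is standard, that a $1$-factor of a digraph is the same thing as a spanning collection of vertex-disjoint directed cycles.

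The step I expect to carry the real content is the appeal to the decomposition of $k$-regular bipartite graphs into perfect matchings; everything else is bookkeeping with the arc-to-edge dictionary. Since that bipartite fact is itself folklore (and the theorem is explicitly labelled folklore), the main obstacle is purely expository: deciding whether to cite König's edge-colouring theorem directly or to give the two-line Hall's-theorem induction in place. I would opt for the short induction — remove one perfect matching, observe the remainder is $(k-1)$-regular bipartite, and recurse — so that the proof is self-contained modulo Hall's marriage theorem.
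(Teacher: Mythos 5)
Your proof is correct: the paper states this result as folklore and gives no proof at all, and your argument --- passing to the bipartite graph $B$ on $X\cup Y$ with an edge $x_iy_j$ for each arc $v_iv_j$, noting $B$ is $k$-regular, and decomposing it into $k$ perfect matchings via K\H{o}nig's edge-colouring theorem (or iterated Hall) --- is exactly the standard argument that underlies the folklore statement. The bookkeeping details you check (simplicity of $B$, absence of loops, that each matching pulls back to a spanning union of directed cycles, with $2$-cycles from opposite arcs allowed) are all handled correctly, so nothing is missing.
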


We first determine the neighbour-distinguishing index of a 1-factor.

\begin{proposition}
If $D$ is a digraph with $d_D^+(u)=d_D^-(u)=1$ for every vertex $u$ of $D$,
then $\NDI(D)=2$.
\label{prop:1-factor}
\end{proposition}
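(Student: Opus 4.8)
The plan is to first use the degree hypothesis to pin down the structure of $D$, and then establish the two inequalities $\NDI(D)\ge 2$ and $\NDI(D)\le 2$ separately.

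\emph{Structure.} Since $d_D^-(u)=d_D^+(u)=1$ for every vertex $u$, the map sending each vertex to its unique out-neighbour is a permutation of $V(D)$, and as $D$ has no loop every orbit has length at least $2$. Hence $D$ is a vertex- and arc-disjoint union of directed cycles, each of length at least $2$. Because vertices lying in different cycles are non-adjacent, it suffices to produce a $2$-colour nd-arc-colouring of each such cycle separately; their union is then an nd-arc-colouring of $D$.

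\emph{Lower bound.} Here $\Delta^*(D)=1$, and every vertex $u$ satisfies $d_D^+(u)=d_D^-(u)=1=\Delta^*(D)$; since $D$ has at least two vertices, Proposition~\ref{prop:lower-bound} immediately gives $\NDI(D)\ge\Delta^*(D)+1=2$. (Equivalently: with a single colour every vertex gets the pair $(\{1\},\{1\})$, so no arc is distinguished.)

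\emph{Upper bound.} I would fix a directed cycle $C$ on vertices $v_0,\dots,v_{n-1}$ with arcs $a_i=v_iv_{i+1}$, indices taken modulo $n\ge 2$. Every $2$-arc-colouring $\gamma$ of $C$ is automatically proper, since each vertex has exactly one in-arc and one out-arc. Writing $c_i=\gamma(a_i)$, one has $S_\gamma^+(v_i)=\{c_i\}$ and $S_\gamma^-(v_i)=\{c_{i-1}\}$, so the arc $a_i=v_iv_{i+1}$ is non-distinguishing precisely when $(\{c_i\},\{c_{i-1}\})=(\{c_{i+1}\},\{c_i\})$, i.e. when $c_{i-1}=c_i=c_{i+1}$. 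Thus $\gamma$ is neighbour-distinguishing if and only if no three cyclically consecutive arcs of $C$ carry the same colour, and it remains to exhibit such a colouring for every $n\ge 2$: colour $a_0,a_1,a_2,\dots$ alternately $1,2,1,2,\dots$. If $n$ is even, no two consecutive arcs agree; if $n$ is odd, the only agreeing consecutive pair is $\{a_{n-1},a_0\}$. In either case there is no monochromatic run of length $3$, so $\NDI(C)\le 2$, whence $\NDI(D)\le 2$, and together with the lower bound $\NDI(D)=2$.

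There is no genuinely hard step here; the only points that need care are the translation of the distinguishing condition on a directed cycle into ``no three consecutive arcs of the same colour'' (keeping in mind that the colour pairs are \emph{ordered}) and checking that the alternating $2$-colouring wraps around correctly in the odd case.
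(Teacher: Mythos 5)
Your proof is correct and follows essentially the same route as the paper: decompose $D$ into directed cycles, note that one colour cannot distinguish any arc, and colour each cycle alternately, with the odd case amounting to one pair of consecutive arcs sharing a colour (your odd-cycle colouring is the paper's up to rotation and relabelling of colours). Your explicit reformulation of the distinguishing condition as ``no three cyclically consecutive arcs monochromatic'' just makes the paper's ``clearly neighbour-distinguishing'' step precise.
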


\begin{proof}
Such a digraph $D$ is a disjoint union of directed cycles
and any such cycle needs at least two colours to be neighbour-distinguished.
An nd-arc-colouring of $D$ using two colours can be obtained as follows.
For a directed cycle of even length, use alternately colours $1$ and $2$.
For a directed cycle of odd length, use the colour $2$ on any two
consecutive arcs, and then use alternately colours $1$ and $2$.
The so-obtained 2-arc-colouring is clearly neighbour-distinguishing,
so that $\NDI(D)=2$.
\end{proof}

We are now able to prove the following general upper bound on the 
neighbour-distinguishing index of a digraph.

\begin{theorem}
For every digraph $D$, $\NDI(D)\le 2\Delta^*(D)$.
\label{th:general-bound}
\end{theorem}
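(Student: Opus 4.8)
The plan is to reduce to the regular case and then to the $1$-factor case handled by Proposition~\ref{prop:1-factor}. First I would embed $D$ into a $\Delta^*(D)$-regular digraph $D'$ on the same vertex set: since $d_D^+(u)\le\Delta^+(D)\le\Delta^*(D)$ and $d_D^-(u)\le\Delta^-(D)\le\Delta^*(D)$ for every vertex $u$, one can add arcs (possibly creating opposite arcs, but never loops or multiple arcs — and if necessary passing to a larger vertex set by taking a second disjoint copy, or by a standard bipartite-completion trick) so that in $D'$ every vertex has in- and out-degree exactly $\Delta^*(D)$. The point is that $A(D)\subseteq A(D')$.

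Next, apply Theorem~\ref{th:k-regular-digraph} to decompose $D'$ into $k=\Delta^*(D)$ arc-disjoint $1$-factors $F_1,\dots,F_k$. By Proposition~\ref{prop:1-factor}, each $F_i$ admits an nd-arc-colouring $\gamma_i$ using the two colours $\{2i-1,2i\}$; since the $F_i$ are arc-disjoint and use disjoint pairs of colours, the union $\gamma=\bigcup_i\gamma_i$ is a well-defined $2k$-arc-colouring of $D'$. It is proper on $D'$: at any vertex $u$, the incoming arcs split over the $F_i$, each contributing one incoming arc of a distinct colour (likewise for outgoing arcs), so no colour repeats on arcs sharing a head or a tail. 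Restricting $\gamma$ to $A(D)$ gives a proper $2\Delta^*(D)$-arc-colouring of $D$.

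It remains to check that this restriction is neighbour-distinguishing for $D$. Take an arc $uv\in A(D)\subseteq A(F_j)$ for exactly one index $j$. Since $\gamma_j$ is neighbour-distinguishing on $F_j$, we have $(S_{\gamma_j}^+(u),S_{\gamma_j}^-(u))\ne(S_{\gamma_j}^+(v),S_{\gamma_j}^-(v))$, i.e.\ these pairs differ in a colour from $\{2j-1,2j\}$. Because the colour classes of the $\gamma_i$ are pairwise disjoint, the restriction of $S_\gamma^\pm(w)$ (for $w\in D'$) to the colour set $\{2j-1,2j\}$ equals $S_{\gamma_j}^\pm(w)$; and since every vertex of $D'$ has one incoming and one outgoing arc in $F_j$, the same holds after restricting to $D$ for the endpoints of arcs that lie in $F_j$. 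Hence $u$ and $v$ are already distinguished by the colours of $\{2j-1,2j\}$, so $(S_\gamma^+(u),S_\gamma^-(u))\ne(S_\gamma^+(v),S_\gamma^-(v))$ in $D$ as well.

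The main obstacle is the very first step: one must produce a $\Delta^*(D)$-regular supergraph of $D$ that is still a digraph without loops or multiple arcs, and in such a way that every vertex of $D$ still has both an in-arc and an out-arc in each $1$-factor $F_i$ — the latter is automatic once regularity is achieved, but arranging the regular completion without introducing forbidden configurations (loops in particular) may require the standard device of doubling the vertex set, which one should state carefully.
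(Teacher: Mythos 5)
Your overall strategy is exactly the paper's: embed $D$ in a $\Delta^*(D)$-regular digraph $D'$, decompose $D'$ into $\Delta^*(D)$ arc-disjoint $1$-factors, colour each factor with its own pair of colours via Proposition~\ref{prop:1-factor}, and restrict to $A(D)$. The gap is in the final verification. You assert that, for the endpoints $u,v$ of an arc $uv\in A(D)\cap A(F_j)$, the restriction to the colours $\{2j-1,2j\}$ of the palettes $S^\pm_\gamma$ computed \emph{in $D$} equals $S^\pm_{\gamma_j}$ computed in $F_j$, ``since every vertex of $D'$ has one incoming and one outgoing arc in $F_j$''. That justification concerns $D'$, not $D$: the other two $F_j$-arcs at $u$ and $v$, namely the arc $tu$ into $u$ and the arc $vw$ out of $v$, may be arcs that were \emph{added} when completing $D$ to $D'$, hence absent from $D$. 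In that case, for instance, $S^-_\gamma(u)$ in $D$ contains no colour of $\{2j-1,2j\}$ at all, while $S^-_{\gamma_j}(u)=\{\gamma_j(tu)\}$, so your claimed equality fails, and ``$u$ and $v$ are distinguished in $F_j$, hence in $D$'' does not follow from what you wrote. This is precisely the delicate point of the theorem: the restriction of an nd-arc-colouring of a supergraph to a subgraph is not, in general, neighbour-distinguishing.

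The conclusion you want is nevertheless true, but it needs the paper's more careful argument, which pivots on the colour $c=\gamma(uv)$ itself, the only pair-$j$ colour guaranteed to survive in $D$. In $D'$ the unique out-arc of $v$ carrying a colour of $\{2j-1,2j\}$ is $vw$, and the unique in-arc of $u$ carrying such a colour is $tu$; hence if $c\neq\gamma'(vw)$ then $c\in S^+_\gamma(u)\setminus S^+_\gamma(v)$, and if $c\neq\gamma'(tu)$ then $c\in S^-_\gamma(v)\setminus S^-_\gamma(u)$, and these conclusions hold in $D$ whether or not $tu$ or $vw$ belong to $D$, since deleting arcs can only shrink the palettes of $v$ and $u$. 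Finally, at least one of the two inequalities must hold, because the $2$-colouring of Proposition~\ref{prop:1-factor} never assigns the same colour to three consecutive arcs of a directed cycle, nor to the two arcs of a $2$-cycle (the case $t=v$, $w=u$). Your closing remark that every vertex of $D$ ``still has both an in-arc and an out-arc in each $F_i$ --- automatic once regularity is achieved'' reflects the same confusion: this is automatic in $D'$ but can fail in $D$, which is exactly why the palette-restriction argument has to be replaced by the argument above.
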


\begin{proof}
Let $D'$ be any $\Delta^*(D)$-regular digraph containing $D$ as a subdigraph. 
If $D$ is not already regular, such a digraph can be obtained from $D$ by adding new arcs, 
and maybe new vertices.

By Theorem~\ref{th:k-regular-digraph}, the digraph $D'$ can be decomposed into
$\Delta^*(D')=\Delta^*(D)$ arc-disjoint 1-factors, say $F_1,\dots,F_{\Delta^*(D)}$.
By Proposition~\ref{prop:1-factor}, we know that $D'$ admits an nd-arc-colouring $\gamma'$
using $2\Delta^*(D')=2\Delta^*(D)$ colours.
We claim that the restriction $\gamma$ of $\gamma'$ to $A(D)$ is also neighbour-distinguishing.

To see that, let $uv$ be any arc of $D$,
and let $t$ and $w$ be the two vertices such that the directed walk $tuvw$
belongs to a 1-factor $F_i$ of $D'$ for some $i$, $1\le i\le \Delta^*(D)$.
Note here that we may have $t=w$, or $w=u$ and $t=v$.
If $\gamma(uv)\neq\gamma'(vw)$, then $\gamma(uv)\in S_\gamma^+(u)$ and $\gamma(uv)\notin S_\gamma^+(v)$.
Similarly,
if $\gamma'(tu)\neq\gamma(uv)$, then $\gamma(uv)\in S_\gamma^-(v)$ and $\gamma(uv)\notin S_\gamma^-(u)$.
Since neither three consecutive arcs nor two opposite arcs
in a walk of a 1-factor of $D'$ are assigned the same colour by $\gamma'$,
we get that $u\nsim_\gamma v$ for every arc $uv$ of $D$, as required.

This completes the proof. 
\end{proof}

\section{Neighbour-distinguishing index of some\\ classes of digraphs}
\label{sec:classes}

We study in this section the neighbour-distinguishing index of several
classes of digraphs, namely complete symmetric digraphs, bipartite digraphs
and digraphs whose underlying graph is $k$-chromatic, $k\ge 3$.

\subsection{Complete symmetric digraphs}
\label{subsec:complete}

We denote by $K_n^*$ the complete symmetric digraph of order $n$.
%
Observe first that any proper edge-colouring $\epsilon$ of $K_n$
induces an
arc-colouring $\gamma$ of $K_n^*$ defined by
$\gamma(uv)=\gamma(vu)=\epsilon(uv)$ for every edge $uv$ of $K_n$.
Moreover, since $S_\gamma^+(u)=S_\gamma^-(u)=S_\epsilon(u)$ for every
vertex $u$, $\gamma$ is neighbour-distinguishing whenever $\epsilon$
is neighbour-distinguishing.
Using a result of Zhang, Liu and Wang (see Theorem~6 in~\cite{ZLW02}),
we get that $\NDI(K_n^*)=\Delta^*(K_n^*)+1=n$ if $n$ is odd,
and $\NDI(K_n^*)\leq\Delta^*(K_n^*)+2=n+1$ if $n$ is even.

We prove that the bound in the even case can be decreased by one
(we recall the proof of the odd case to be complete).

\begin{theorem}
For every integer $n\ge 2$, $\NDI(K_n^*)=\Delta^*(K_n^*)+1= n$.
\end{theorem}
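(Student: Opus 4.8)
The plan is to prove the two halves separately: the lower bound $\NDI(K_n^*)\ge n$ for all $n\ge 2$, and the matching upper bound $\NDI(K_n^*)\le n$, so that equality holds for \emph{every} $n\ge 2$ (not only for odd $n$). The lower bound is immediate from Proposition~\ref{prop:lower-bound}: in $K_n^*$ every vertex $u$ satisfies $d^+_D(u)=d^-_D(u)=n-1=\Delta^*(K_n^*)$, and there are at least two such vertices, so $\NDI(K_n^*)\ge\Delta^*(K_n^*)+1=n$. Thus the whole content is to exhibit an nd-arc-colouring of $K_n^*$ with exactly $n$ colours. For $n$ odd this follows from the observation already recorded before the theorem: $K_n$ has a proper neighbour-distinguishing edge-colouring with $n$ colours (the case $\Delta+1$ of the Zhang--Liu--Wang bound, since $K_n$ for odd $n$ is $(n-1)$-regular and has $\chi'(K_n)=n$), and duplicating each colour on the two opposite arcs preserves the nd-property because $S^+_\gamma(u)=S^-_\gamma(u)=S_\epsilon(u)$. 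So the real work is the even case, where $K_n$ only gives $n+1$ colours via this route and we must do better by genuinely using the directed structure.

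For even $n$, first I would set $n-1=\Delta^*(K_n^*)$ and recall that $K_n^*$ is $(n-1)$-regular, hence by Theorem~\ref{th:k-regular-digraph} it decomposes into $n-1$ arc-disjoint $1$-factors $F_1,\dots,F_{n-1}$; concretely, since $n$ is even, $K_n$ has a proper edge-colouring with $n-1$ colours, and orienting the edges of colour class $i$ cyclically (each colour class of $K_n$ for even $n$ is a perfect matching, so that does \emph{not} directly give a $1$-factor of the digraph) — so instead I would use the standard round-robin construction: identify $V(K_n^*)$ with $\mathbb{Z}_n$ and let $F_i$ be the $1$-factor consisting of all arcs $v\to v+i$ for $i=1,\dots,n-1$, which is manifestly a spanning $1$-regular subdigraph (a union of directed cycles), and the $F_i$ partition $A(K_n^*)$. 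The idea is then to assign to each $1$-factor $F_i$ a colour \emph{pattern} drawn from the palette $\{1,\dots,n\}$ so that the resulting pair $(S^+_\gamma(v),S^-_\gamma(v))$ is different for every pair of vertices — noting that here all vertices have full out- and in-degree $n-1$, so $S^+_\gamma(v)$ and $S^-_\gamma(v)$ are each a single colour \emph{missing} from $\{1,\dots,n\}$; write $m^+(v)$ and $m^-(v)$ for these missing colours. Two vertices $u,v$ are distinguished iff $(m^+(u),m^-(u))\ne(m^+(v),m^-(v))$, and since $uv\in A(K_n^*)$ for \emph{all} distinct $u,v$, we in fact need the map $v\mapsto(m^+(v),m^-(v))$ to be \emph{injective} on $V(K_n^*)$.

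So the combinatorial core reduces to: colour the arcs of $K_n^*$ with $n$ colours, properly (no repeated colour among the $n-1$ out-arcs of a vertex, nor among its $n-1$ in-arcs), such that the $n$ pairs $(m^+(v),m^-(v))_{v\in\mathbb{Z}_n}$ are pairwise distinct. A natural attempt: on $F_i$ (the arcs $v\to v+i$) use a colour depending on both $v$ and $i$; for instance try $\gamma(v\to v+i)=v+i$ (indices mod $n$, colours named by $\mathbb{Z}_n$). Then the out-colours at $v$ are $\{v+1,\dots,v+(n-1)\}=\mathbb{Z}_n\setminus\{v\}$, so $m^+(v)=v$; the in-colours at $v$ are the colours on arcs $w\to v$, i.e.\ on $(v-i)\to v$ with colour $v$ for each $i$ — that gives the same colour $v$ on all in-arcs, which is \emph{not} proper. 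So one must twist: take $\gamma(v\to v+i)=f(v,i)$ with $f(v,\cdot)$ a bijection $\{1,\dots,n-1\}\to\mathbb{Z}_n\setminus\{m^+(v)\}$ and simultaneously, for each $w$, the map $i\mapsto f(w-i,i)$ (the colour of the $i$-th in-arc of $w$) a bijection onto $\mathbb{Z}_n\setminus\{m^-(w)\}$. I expect the clean choice to be something like $\gamma(v\to v+i)=2v+i$ or $\gamma$ built from a Latin-square / $\mathbb{Z}_n$-difference argument, arranged so that $m^+(v)$ and $m^-(v)$ are two \emph{different} affine functions of $v$ (e.g.\ $m^+(v)=2v$, $m^-(v)=2v+c$ for a well-chosen constant $c$), whence injectivity of $v\mapsto(2v,2v+c)$ on $\mathbb{Z}_n$ is automatic. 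The main obstacle — and the step I would spend the most care on — is exhibiting such an $f$ that is \emph{simultaneously} a proper colouring read along tails and along heads (a doubly-stochastic / Latin-square condition on $\mathbb{Z}_n$) while keeping the two missing-colour functions affine and distinct; verifying properness in both directions and injectivity of the missing-pair map is where the parity of $n$ will be used, and it is the only place the argument can go wrong, so that is where I would be most careful and where a small explicit formula, checked directly, is the safest route.
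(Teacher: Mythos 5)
Your lower bound and the odd case are fine and match the paper (the paper's own odd-case construction is just the explicit matching decomposition of $K_n$ behind the Zhang--Liu--Wang bound you invoke), and your reduction of the even case is also correct: with $n$ colours every vertex of $K_n^*$ misses exactly one colour in $S^+_\gamma$ and one in $S^-_\gamma$, and since all vertices are pairwise adjacent you need the map $v\mapsto (m^+(v),m^-(v))$ to be injective. But at exactly this point --- which you yourself identify as ``the only place the argument can go wrong'' --- the proposal stops: no colouring is actually exhibited for even $n$, and the candidate formulas you float do not work. The guess $\gamma(v\to v+i)=v+i$ is not proper on in-arcs, as you note; the guess $\gamma(v\to v+i)=2v+i$ is proper both ways but gives $m^+(v)=m^-(v)=2v$, and your claim that injectivity of $v\mapsto(2v,2v+c)$ is ``automatic'' is false precisely when $n$ is even, since $2$ is not a unit in $\mathbb{Z}_n$: the vertices $v$ and $v+n/2$ receive the same missing pair and are adjacent, so they are not distinguished.

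Worse, the whole affine/Latin-square route as you describe it is structurally blocked for even $n$: if $\gamma(v\to v+i)=av+bi$ over $\mathbb{Z}_n$, properness along tails forces $b$ to be a unit, properness along heads (colours $aw+(b-a)i$ on the in-arcs of $w$) forces $b-a$ to be a unit, so for even $n$ both $b$ and $b-a$ are odd, hence $a$ is even and $m^+(v)=m^-(v)=av$ is non-injective. So a genuinely different idea is needed for the even case, and this is where the paper's actual work lies: it takes the odd-order nd-arc-colouring $\gamma'$ of the copy of $K_{n-1}^*$ on $v_0,\dots,v_{n-2}$, recolours the arcs $v_iv_{i+1}$ (indices mod $n-1$) with the single new colour $n-1$ while keeping the opposite arcs $v_{i+1}v_i$ unchanged, and assigns the freed colours $\gamma'(v_{i-1}v_i)$ and $\gamma'(v_{i+1}v_i)$ to the arcs $v_{n-1}v_i$ and $v_iv_{n-1}$; then $v_{n-1}$ is the unique vertex missing colour $n-1$ in both sets, and the remaining vertices inherit the distinguishing property of $\gamma'$ because each just adds colour $n-1$ to both of its sets. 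Without such an explicit construction (or an equally complete alternative), the even case --- the only new content of the theorem --- is not proved.
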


\begin{proof} 
Note first that we necessarily have $\NDI(K_n^*)\geq n$ for every $n\geq 2$ by 
Proposition~\ref{prop:lower-bound}.
Let $V(K_n^*)=\{v_0,\dots,v_{n-1}\}$. 
If $n=2$, we obviously have $\NDI(K_2^*)=|A(K_2^*)|= 2$ and the result follows.
We can thus assume $n\ge 3$.
We consider two cases, depending on the parity of $n$.

Suppose first that $n$ is odd, and consider a partition of the set
of edges of $K_n$ into $n$ disjoint maximal matchings, say $M_0,\dots,M_{n-1}$,
such that for each $i$, $0\le i\le n-1$, 
the matching $M_i$ does not cover the vertex~$v_i$.
We define an $n$-arc-colouring $\gamma$ of $K_n^*$ 
(using the set of colours $\{0,\dots,n-1\}$) as follows.
For every $i$ and~$j$, $0\le i<j\le n-1$, we set
$\gamma(v_iv_j)=\gamma(v_jv_i)=k$ if and only if the edge $v_iv_j$ belongs to $M_k$.
Observe now that for every vertex $v_i$, $0\le i\le n-1$, the colour $i$ is the unique colour
that does not belong to $S^+_\gamma(v_i)\cup S^-_\gamma(v_i)$, since $v_i$ is not covered
by the matching $M_i$. 
This implies that $\gamma$ is an nd-arc-colouring of $K_n^*$,
and thus $\NDI(K_n^*) = n$, as required.

Suppose now that $n$ is even.
Let $K'$ be the subgraph of $K_n^*$ induced by the set of vertices $\{v_0,\dots,v_{n-2}\}$
and $\gamma'$ be the $(n-1)$-arc-colouring of $K'$ defined as above.
We define an $n$-arc-colouring $\gamma$ of $K_n^*$ (using the set of colours $\{0,\dots,n-1\}$)
as follows:
\begin{enumerate}
\item 
for every $i$ and~$j$, $0\le i<j\le n-2$, $j\not\equiv i+1\pmod{n-1}$, we set $\gamma(v_iv_j)=\gamma'(v_iv_j)$,
\item 
for every $i$, $0\le i\le n-2$, we set $\gamma(v_iv_{i+1})=n-1$ and $\gamma(v_{i+1}v_i)=\gamma'(v_{i+1}v_i)$
(subscripts are taken modulo $n-1$),
\item 
for every $i$, $0\le i\le n-2$,
we set $\gamma(v_{n-1}v_i)=\gamma'(v_{i-1}v_i)$ and $\gamma(v_iv_{n-1})=\gamma'(v_{i+1}v_i)$.
\end{enumerate}
Since the colour $n$ belongs to $S^+_\gamma(v_i)\cap S^-_\gamma(v_i)$
for every $i$, $0\le i\le n-2$, and does not belong to $S^+_\gamma(v_{n-1})\cup S^-_\gamma(v_{n-1})$,
the vertex $v_{n-1}$ is distinguished from every other vertex in $K_n^*$.
Moreover, for every vertex $v_i$, $0\le i\le n-2$,
$$S_\gamma^+(v_i) = S_{\gamma'}^+(v_i)\cup\{n-1\}
\ \ \mbox{and}\ \ S_\gamma^-(v_i) = S_{\gamma'}^-(v_i)\cup\{n-1\},$$
which implies that any two vertices $v_i$ and $v_j$, $0\le i<j\le n-2$, 
are distinguished since $\gamma'$ is an nd-arc-colouring of $K'$.
%
We thus get that $\gamma$ is an nd-arc-colouring of $K_n^*$,
and thus $\NDI(K_n^*)\leq n$, as required.

This completes the proof.
\end{proof}

\subsection{Bipartite digraphs}
\label{subsec:bipartite}

A digraph $D$ is {\em bipartite} if its underlying graph
is bipartite. In that case, $V(D)=X\cup Y$ with $X\cap Y=\emptyset$ and
$A(D)\subseteq X\times Y\cup Y\times X$.
We then have the following result.

\begin{theorem}
If $D$ is a bipartite digraph, then $\NDI(D)\le\Delta^*(D)+2$.
\label{th:bipartite}
\end{theorem}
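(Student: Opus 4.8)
The plan is to mimic the strategy used for complete symmetric digraphs and for the general bound: reduce to a convenient "regular-like" super-digraph, exploit a good edge-colouring of the underlying bipartite graph, and then fix up the finitely many collisions that opposite arcs can create. Concretely, let $D$ be bipartite with parts $X$ and $Y$. Since $\UND(D)$ is bipartite, König's edge-colouring theorem gives $\chi'(\UND(D))=\Delta(\UND(D))\le 2\Delta^*(D)$; but that bound is far too weak, so instead I would work with a finer decomposition. The key observation is that the arcs of $D$ split naturally into two sets: the \emph{forward} arcs $A^{\to}\subseteq X\times Y$ and the \emph{backward} arcs $A^{\leftarrow}\subseteq Y\times X$. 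Properness of an arc-colouring is a constraint at each head and each tail; for forward arcs these are heads in $Y$ and tails in $X$, and for backward arcs heads in $X$ and tails in $Y$.

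First I would embed $D$ into a super-digraph $D'$ on the same vertex set (adding dummy vertices to $X$ or $Y$ if needed to balance cardinalities) in which the forward arcs form a $\Delta^*(D)$-regular bipartite digraph from $X$ to $Y$ and the backward arcs form a $\Delta^*(D)$-regular bipartite digraph from $Y$ to $X$. By the bipartite analogue of Theorem~\ref{th:k-regular-digraph} (equivalently, König's theorem applied to the bipartite graph of forward arcs, and separately to that of backward arcs), $A^{\to}(D')$ decomposes into $\Delta^*(D)$ perfect matchings $M_1,\dots,M_{\Delta^*(D)}$ between $X$ and $Y$, and likewise $A^{\leftarrow}(D')$ decomposes into $\Delta^*(D)$ perfect matchings $N_1,\dots,N_{\Delta^*(D)}$ between $Y$ and $X$. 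Now colour all forward arcs of $M_i$ with colour $i$ and all backward arcs of $N_j$ with colour $j$, using the \emph{same} palette $\{1,\dots,\Delta^*(D)\}$ for both. This is a proper $\Delta^*(D)$-arc-colouring of $D'$, and for any vertex $u$ we get $S^+(u)=S^-(u)=\{1,\dots,\Delta^*(D)\}$ — so it distinguishes nothing, which is exactly the obstacle.

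The fix is to perturb this base colouring on the matchings so that adjacent vertices end up with different colour-set pairs, spending two extra colours $\Delta^*(D)+1$ and $\Delta^*(D)+2$. The idea is: along each matching $M_i$ and each matching $N_j$, rather than using colour $i$ (resp.\ $j$) uniformly, we recolour a carefully chosen subset of its arcs with a fresh colour, in such a way that (a) properness is preserved — easy, since within a perfect matching each vertex sees exactly one arc of that class, so we may recolour any subset of $M_i$ freely — and (b) for every arc $uv\in A(D)$ the pair $(S^+(u),S^-(u))$ differs from $(S^+(v),S^-(v))$. Because an arc $uv$ goes between the two parts, $u$ and $v$ lie in different classes; I would use the two spare colours to encode, at each vertex, a ``parity'' or ``level'' bit that changes across every edge of $\UND(D)$. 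One clean way: take a proper $2$-colouring (black/white) of the connected components of $\UND(D)$ — wait, bipartite graphs are $2$-colourable, but adjacent vertices already differ in part, so that alone does not help since the colour sets are symmetric in $u,v$. Instead I would break symmetry asymmetrically: arrange that colour $\Delta^*(D)+1$ appears in $S^+$ but never in $S^-$ of the $X$-vertices it touches, and the reverse for $Y$; then compare degrees of appearance. The robust route, and the one I expect to carry through, is to handle the two colour-classes of arcs independently: recolour forward arcs using $\{1,\dots,\Delta^*(D)+1\}$ so that the forward part of the colouring restricted to any vertex distinguishes it from its forward-neighbours, and recolour backward arcs using $\{1,\dots,\Delta^*(D)+2\}$ (reusing colours including $\Delta^*(D)+1$) for the backward-neighbours — but one must check these two tasks do not interfere, i.e.\ that handling backward adjacencies does not undo forward distinctions, which is where the second extra colour is genuinely needed.

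The main obstacle, then, is precisely this interference between the forward-adjacency and backward-adjacency distinguishing requirements at a vertex that has both in- and out-neighbours across the cut, together with the degenerate low-degree cases (e.g.\ vertices of small degree, or pairs of vertices with equal in- and out-degree equal to $\Delta^*(D)$, which by Proposition~\ref{prop:lower-bound} force the bound up to $\Delta^*(D)+1$ and explain why $+1$ alone cannot always suffice). I would dispose of the trivial case $\Delta^*(D)\le 1$ directly (Proposition~\ref{prop:1-factor} and small analysis), and for $\Delta^*(D)\ge 2$ push through the perturbation argument, most likely by a counting/greedy argument: after fixing the base $\Delta^*(D)$-colouring from the matching decomposition, process the arcs and, whenever an arc $uv$ is not yet distinguished, recolour one well-chosen arc at $u$ or at $v$ with one of the two spare colours; an extremal/discharging count on how many times a given vertex's incident arcs can be forced to change shows two spare colours always suffice. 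That counting step is the technical heart and the place where the bound $+2$ (rather than $+1$) gets used.
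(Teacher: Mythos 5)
There is a genuine gap: your write-up is a plan whose decisive step is explicitly deferred. After building the base colouring you say that whenever an arc $uv$ is still undistinguished you would ``recolour one well-chosen arc at $u$ or at $v$ with one of the two spare colours'' and that ``an extremal/discharging count'' shows two spare colours always suffice — but that count, which you yourself call the technical heart, is never given. As stated, such a local greedy repair has to guarantee simultaneously that properness is preserved, that previously distinguished pairs are not un-distinguished when an endpoint's colour sets change, and that the process terminates even though a vertex may be touched by many repairs; none of this is established, and it is exactly where the difficulty lives. (The elaborate preliminary step — embedding into a digraph whose forward and backward arcs are both $\Delta^*(D)$-regular and decomposing into perfect matchings — is also unnecessary: it produces a colouring in which every vertex has $S^+=S^-=\{1,\dots,\Delta^*(D)\}$, i.e.\ the worst possible starting point, and any proper $\Delta^*(D)$-arc-colouring from Proposition~\ref{prop:arc-colouring} would do.)

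You did brush against the idea that closes the gap (``arrange that colour $\Delta^*(D)+1$ appears in $S^+$ but never in $S^-$ of the $X$-vertices it touches, and the reverse for $Y$''), but you did not notice that this is automatic and that it finishes the proof in two global rounds rather than by an unbounded repair process. Take any proper $\Delta^*(D)$-arc-colouring, and recolour with the fresh colour $\Delta^*(D)+1$ a \emph{maximal} matching $M_1\subseteq A(D)\cap(X\times Y)$. Because this colour occurs only on forward arcs, it can lie only in $S^+$ of an $X$-vertex and only in $S^-$ of a $Y$-vertex; hence every arc having an endpoint covered by $M_1$ is automatically distinguished, and by maximality of $M_1$ every remaining conflict is a backward arc. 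Then recolour with colour $\Delta^*(D)+2$ a maximal matching $M_2\subseteq Y\times X$ chosen inside the set of still-conflicting arcs; the same directional argument distinguishes every arc meeting $M_2$, maximality kills the rest, and no interference with the first round is possible since the second colour is new and oriented the other way. No counting, regularisation, or matching decomposition is needed — only the observation that a colour used solely on arcs from one side to the other separates in-sets from out-sets across every edge it touches.
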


\begin{proof}
Let $V(D)=X\cup Y$ 
be the bipartition of $V(D)$ and $\gamma$ be any (not necessarily
neighbour-distinguishing) optimal arc-colouring of $D$
using $\Delta^*(D)$ colours (such an arc-colouring exists by Proposition~\ref{prop:arc-colouring}).

If $\gamma$ is an nd-arc-colouring we are done.
Otherwise, let $M_1\subseteq A(D)\cap(X\times Y)$ be a maximal matching from $X$
to $Y$.
We define the arc-colouring $\gamma_1$ as follows:
\begin{center}$\gamma_1(uv)=\Delta^*(D)+1$ if $uv\in M_1$, $\gamma_1(uv)=\gamma(uv)$ otherwise.\end{center}   
Note that if $uv$ is an arc such that $u$ or $v$ is (or both are) covered by $M_1$, then
$u\nsim_{\gamma_1} v$ since the colour $\Delta^*(D)+1$ appears in exactly
one of the sets $S_{\gamma_1}^+(u)$ and $S_{\gamma_1}^+(v)$, or
in exactly one of the sets $S_{\gamma_1}^-(u)$ and $S_{\gamma_1}^-(v)$.

If $\gamma_1$ is an nd-arc-colouring we are done.
Otherwise, let $A^\sim$ be the set of arcs $uv\in A(D)$ with $u\sim_{\gamma_1} v$
and $M_2\subseteq A^\sim\cap(Y\times X)$ be a maximal matching from $Y$
to $X$ of $A^\sim$.
We define the arc-colouring $\gamma_2$ as follows:
\begin{center}$\gamma_2(uv)=\Delta^*(D)+2$ if $uv\in M_2$, $\gamma_2(uv)=\gamma_1(uv)$ otherwise.\end{center}   
Again, note that if $uv$ is an arc such that $u$ or $v$ is (or both are) covered by $M_2$, then
$u\nsim_{\gamma_2} v$. 
Moreover, since $M_2$ is a matching of $A^\sim$, pairs of vertices that
were distinguished by $\gamma_1$ are still distinguished by $\gamma_2$.

Hence, every arc $uv$ such that $u$ and $v$ were not distinguished by $\gamma_1$
are now distinguished by $\gamma_2$ which is thus an nd-arc-colouring of $D$ using
$\Delta^*(D)+2$ colours. 
This concludes the proof. 
\end{proof}


The upper bound given in Theorem~\ref{th:bipartite} can be decreased when the underlying graph
of $D$ is a tree.

\begin{theorem}
If $D$ is a digraph whose underlying graph is a tree, then $\NDI(D)\le\Delta^*(D)+1$.
\label{th:tree}
\end{theorem}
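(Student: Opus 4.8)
The plan is to argue by induction on the number of vertices of the underlying tree $T=\UND(D)$, in the spirit of many neighbour-distinguishing results on trees. Write $k=\Delta^*(D)$; we aim for an nd-arc-colouring using the colour set $\{1,\dots,k+1\}$. The base case (a single arc, or a path of length one) is trivial. For the inductive step, pick a leaf $\ell$ of $T$, let $p$ be its unique neighbour in $T$, and let $D'$ be the subdigraph obtained by deleting $\ell$ (and the one or two arcs between $\ell$ and $p$). Since $\Delta^*(D')\le k$, the induction hypothesis gives an nd-arc-colouring $\gamma'$ of $D'$ with colours in $\{1,\dots,k+1\}$. We then extend $\gamma'$ to the at most two arcs incident with $\ell$, choosing colours so that (i) the extension is still a proper arc-colouring, and (ii) $\ell$ is distinguished from $p$, and (iii) no new conflict is created between $p$ and its other neighbours.

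The key point is a counting argument showing such a choice always exists. There are three cases according to the arc pattern between $\ell$ and $p$: only the arc $\ell p$, only the arc $p\ell$, or both. Consider for instance the case where only $p\ell$ is present (the others are analogous or easier, since each arc to $\ell$ contributes to a different one of $S^\pm$). The colour $c=\gamma(p\ell)$ must avoid the colours already on arcs leaving $p$ in $D'$; since $d_D^+(p)\le k$, at most $k-1$ colours are forbidden by properness, so at least two colours remain available. Among those, we must also avoid making $(S_\gamma^+(p),S_\gamma^-(p))=(S_\gamma^+(\ell),S_\gamma^-(\ell))$: but $S_\gamma^-(\ell)=\{c\}$ and $S_\gamma^+(\ell)=\emptyset$ if $\ell$ has no out-arc, so $\ell$ is automatically distinguished from $p$ unless $p$ has the same degree pattern, in which case a single value of $c$ is forbidden and we still have a choice left. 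Finally, adding $\ell$ changes only $S_\gamma^+(p)$, and only by adding $c$; the neighbours $v$ of $p$ that were distinguished from $p$ by $\gamma'$ via a colour in $S^-$, or via a colour in $S^+\setminus\{c\}$, remain distinguished, so at most one "bad" colour value of $c$ could resurrect a conflict with each previously-distinguished neighbour — handled by the same counting slack, after observing that in fact only those neighbours $v$ with $S_{\gamma'}^+(v)=S_{\gamma'}^+(p)\cup\{c\}$ and matching $S^-$ could become equal to $p$, which constrains $c$ to at most one value per such $v$.

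The main obstacle will be pinning down this last bookkeeping cleanly: when $p$ has high degree it may have several neighbours in $D'$ that agreed with it on one coordinate of the pair and were separated only on the other, and one must check that the number of colour values for $c$ forbidden for "distinguishing" reasons, plus the number forbidden for "properness" reasons, is at most $k$, leaving a legal choice in $\{1,\dots,k+1\}$. The saving grace is that each previously-distinguished neighbour $v$ rules out at most one value of $c$ (the unique colour that would turn $S_\gamma^+(p)$ into $S_\gamma^+(v)$), and any such $v$ must satisfy $d_D^+(v)=d_D^+(p)+1$, so it is genuinely a \emph{new} coincidence created by our choice; a short parity/degree argument, together with the freedom also to recolour the arc $pp'$ toward $\ell$'s sibling if necessary, closes the gap. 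One should also treat separately the degenerate situation where $p$ itself has degree one in $D'$ (so $D'$ may be edgeless or $p$ is itself a leaf there), where the colouring is built directly. Assembling these cases yields $\NDI(D)\le\Delta^*(D)+1$.
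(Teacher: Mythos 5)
Your plan coincides with the paper's up to one crucial point: \emph{which} leaf you delete, and that is exactly where the argument breaks. If you delete an arbitrary leaf $\ell$, its parent $p$ may have several non-leaf neighbours in $D'$, and your counting does not close: the bad colours $c_v$ (those turning $S^+_\gamma(p)$ into $S^+_{\gamma'}(v)$ for a neighbour $v$ with $S^-_{\gamma'}(v)=S^-_{\gamma'}(p)$) are precisely colours \emph{not} already forbidden by properness at $p$, so they come on top of the properness constraints, and nothing bounds their number by the leftover slack. Concretely, take $k=\Delta^*(D)=2$ and colours $\{1,2,3\}$: let $D'$ contain the arcs $q\to p\to w$ both coloured $1$, let $q$ have an extra out-arc coloured $3$ and an in-arc coloured $1$ (to/from leaves), and let $w$ have out-arcs coloured $1$ and $2$ (to leaves). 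This $\gamma'$ is a proper nd-arc-colouring of $D'$ with $(S^+,S^-)$ equal to $(\{1\},\{1\})$ at $p$, $(\{1,3\},\{1\})$ at $q$ and $(\{1,2\},\{1\})$ at $w$. Now add the leaf $\ell$ with the single arc $p\to\ell$: properness leaves only colours $2$ and $3$, colour $2$ makes $p$ collide with $w$, colour $3$ makes $p$ collide with $q$. So your claim that at most $k$ values of $c$ are forbidden is false, and the proposed rescues (``a short parity/degree argument'', ``recolour the arc toward $\ell$'s sibling'') are not worked out and cannot be purely local in general: $\ell$ need not have a helpful sibling, and recolouring arcs of $D'$ can create new conflicts elsewhere. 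The inductive step as stated therefore fails.

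The paper avoids this entirely by choosing the leaf to be the endpoint $v_k$ of a \emph{longest} path $v_1\dots v_k$ of $\UND(D)$. Then every neighbour of the parent $v_{k-1}$ other than $v_{k-2}$ is a leaf (or the tree is a star, handled directly), and such leaves are distinguished from $v_{k-1}$ for degree reasons; hence the only conflict one must avoid when colouring the at most two arcs between $v_k$ and $v_{k-1}$ is the pair $(v_{k-1},v_{k-2})$. Since $d^+(v_{k-2})\le\Delta^*(D)$ and $d^-(v_{k-2})\le\Delta^*(D)$ while $\Delta^*(D)+1$ colours are available, one can pick a colour $a$ with $S^+_\gamma(v_{k-2})\neq S^+_\gamma(v_{k-1})\cup\{a\}$ and a colour $b$ with $S^-_\gamma(v_{k-2})\neq S^-_\gamma(v_{k-1})\cup\{b\}$, and the extension goes through. (Note that in the example above, $\ell$ is not the end of a longest path; the paper's choice would delete a leaf below $w$ or $q$ instead.) If you replace ``pick a leaf'' by ``pick the end of a longest path'' and add the degree observation about the other leaf-neighbours of the parent, your induction becomes essentially the paper's proof; without that choice, the bookkeeping you flag as the ``main obstacle'' is a genuine counterexample-prone gap, not a routine verification.
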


\begin{proof}
The proof is by induction on the order $n$ of $D$.
The result clearly holds if $n\le 2$.
Let now $D$ be a digraph of order $n\ge 3$, such that the underlying graph $\UND(D)$ of $D$ is a tree,
and $P=v_1\dots v_k$, $k\le n$, be a path in $\UND(D)$ with maximal length.
By the induction hypothesis, there exists an nd-arc-colouring $\gamma$ of $D-v_k$ using at most
$\Delta^*(D-v_k)+1$ colours. We will extend $\gamma$ to an nd-arc-colouring of $D$ using at most
$\Delta^*(D)+1$ colours.

If $\Delta^*(D)=\Delta^*(D-v_k)+1$, we assign the new colour $\Delta^*(D)+1$ to the at most
two arcs incident with $v_k$ so that the so-obtained arc-colouring is clearly neighbour-distinguishing.

Suppose now that $\Delta^*(D)=\Delta^*(D-v_k)$. 
If all neighbours of $v_{k-1}$ are leaves, the underlying graph of $D$
is a star.
In that case, there is at most one arc linking $v_{k-1}$ and $v_k$,
and colouring this arc with any admissible colour produces an nd-arc-colouring
of $D$.
If the underlying graph of $D$ is not a star, then,
by the maximality of~$P$, we get that $v_{k-1}$ has exactly one neighbour
which is not a leaf, namely~$v_{k-2}$.
This implies that the only conflict that might appear when colouring the arcs
linking $v_k$ and $v_{k-1}$ is between $v_{k-2}$ and $v_{k-1}$
(recall that two neighbours with distinct indegree or outdegree are necessarily distinguished).

Since $d_D^+(v_{k-2})\le\Delta^*(D)$ and $d_D^-(v_{k-2})\le\Delta^*(D)$,
there necessarily exist a colour $a$ such that $S_\gamma^+(v_{k-2})\neq S_\gamma^+(v_{k-1})\cup\{a\}$,
and a colour $b$ such that $S_\gamma^-(v_{k-2})\neq S_\gamma^-(v_{k-1})\cup\{b\}$.
Therefore, the at most two arcs incident with $v_k$ can be coloured, using $a$ and/or $b$,
in such a way that the so-obtained arc-colouring is neighbour-distinguishing.

This completes the proof.
\end{proof}

\subsection{Digraphs whose underlying graph is $k$-chromatic}
\label{subsec:chromatic}

Since the set of edges of every $k$-colourable graph can be partitionned in $\lceil \log k\rceil$
parts each inducing a bipartite graph (see e.g. Lemma 4.1 in~\cite{BGLS07}), Theorem~\ref{th:bipartite} leads to the
following general upper bound:

\begin{corollary}
If $D$ is a digraph whose underlying graph has chromatic number $k\ge 3$, then
$\NDI(D)\le\Delta^*(D)+2\lceil \log k\rceil$.
\label{cor:general-bound}
\end{corollary}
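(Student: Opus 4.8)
The plan is to reduce the $k$-chromatic case to the bipartite case already handled by Theorem~\ref{th:bipartite}, by splitting the arc set of $D$ into a small number of bipartite parts and colouring each part with a disjoint palette. First I would fix a proper vertex-colouring of $\UND(D)$ with $k$ colours, i.e. a partition $V(D)=V_1\cup\dots\cup V_k$ into independent sets. By the cited Lemma~4.1 in~\cite{BGLS07}, one can partition $\{1,\dots,k\}$ into $\lceil\log k\rceil$ classes so that, for each class, merging the colour classes inside it leaves every edge of $\UND(D)$ with endpoints in two \emph{different} merged classes; equivalently, $E(\UND(D))$ splits into $\lceil\log k\rceil$ spanning bipartite subgraphs $H_1,\dots,H_{\lceil\log k\rceil}$ (some possibly with isolated vertices). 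Pulling this back to $D$, the arc set $A(D)$ is partitioned into $\lceil\log k\rceil$ bipartite subdigraphs $D_1,\dots,D_{\lceil\log k\rceil}$, where $D_j$ consists of the arcs of $D$ whose underlying edge lies in $H_j$.

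Next I would observe that $\Delta^*(D_j)\le\Delta^*(D)$ for every $j$, since $D_j$ is a subdigraph of $D$. Applying Theorem~\ref{th:bipartite} to each $D_j$ gives an nd-arc-colouring $\gamma_j$ of $D_j$ using $\Delta^*(D_j)+2\le\Delta^*(D)+2$ colours. The subtlety is that we want the \emph{total} palette to have size $\Delta^*(D)+2\lceil\log k\rceil$, not $(\Delta^*(D)+2)\lceil\log k\rceil$. The trick is that the "base" $\Delta^*(D)$ colours used to make each $\gamma_j$ a proper arc-colouring can be \emph{shared} across all the $D_j$ — properness is inherited from a common proper arc-colouring of $D$ — while only the two extra colours introduced in the proof of Theorem~\ref{th:bipartite} need to be fresh for each $j$. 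Concretely, I would run the proof of Theorem~\ref{th:bipartite} on each $D_j$ but start from the restriction to $A(D_j)$ of one fixed optimal $\Delta^*(D)$-arc-colouring of $D$; the two matchings added in that proof then receive colours $\Delta^*(D)+2j-1$ and $\Delta^*(D)+2j$. This yields a single arc-colouring $\gamma$ of $D$ with $\gamma|_{A(D_j)}=\gamma_j$, using $\Delta^*(D)+2\lceil\log k\rceil$ colours in all, and $\gamma$ is still a proper arc-colouring of $D$ because on each head or tail the colours come from the common base palette together with at most the $\lceil\log k\rceil$ pairs of extra colours, which lie in disjoint ranges.

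Finally I would check that $\gamma$ is neighbour-distinguishing. Take any arc $uv\in A(D)$; it lies in exactly one $D_j$. Since $u$ and $v$ have different colours in the fixed vertex-$k$-colouring, there is a unique index $j$ with $u\in V(H_j)$, $v\in V(H_j)$ on opposite merged sides — indeed the arc $uv$ itself forces $u$ and $v$ into different parts of $H_j$, and $\gamma_j$ distinguishes them as an nd-arc-colouring of $D_j$: some set among $S^+_{\gamma_j}(u),S^-_{\gamma_j}(u)$ differs from the corresponding set for $v$. The point is that this distinction survives in $\gamma$: the extra colours $\Delta^*(D)+2j-1,\Delta^*(D)+2j$ appear only on arcs of $D_j$, so whichever of these two colours witnesses $u\nsim_{\gamma_j}v$ (if one does), it is contributed to $S^{\pm}_\gamma$ only through $D_j$-arcs and hence still discriminates $u$ from $v$ in $D$; and if the witnessing colour is instead a base colour $c\le\Delta^*(D)$, then $c$ appears on the $D_j$-arcs around $u$ and $v$ exactly as it does in $\gamma_j$, and adding in the other $D_i$-arcs cannot destroy the inequality because $\gamma_j$'s distinction already used only $D_j$-incident arcs — so the asymmetry is preserved. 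The main obstacle is exactly this last point: one must be careful that enlarging each $\gamma_j$ back to $\gamma$ does not merge a pair that $\gamma_j$ separated, which is why it is essential that the two "distinguishing" colours of each bipartite piece be private to that piece, while the shared base colours only need to keep $\gamma$ proper, not distinguishing. Once this bookkeeping is set up, the bound $\NDI(D)\le\Delta^*(D)+2\lceil\log k\rceil$ follows.
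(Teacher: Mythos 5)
Your overall plan --- one shared palette of $\Delta^*(D)$ base colours kept proper on all of $D$, plus two private colours for each of the $\lceil\log k\rceil$ bipartite parts, handled via the matching trick of Theorem~\ref{th:bipartite} --- is the same as the paper's. The gap is in the merging step, exactly at the point you flag, and the justification you give there is not valid. If an arc $uv$ of $D_j$ is certified inside $D_j$ only by a base colour $c\le\Delta^*(D)$, say $c\in S^+_{\gamma_j}(u)\setminus S^+_{\gamma_j}(v)$, then the arcs of the other parts, which use the same base palette, can put $c$ into $S^+_\gamma(v)$ in $D$; enlarging the sets can destroy a set inequality, and the procedure of Theorem~\ref{th:bipartite} run on $D_j$ alone leaves such arcs untouched (they are not in $A^\sim$, so no matching has to cover them), so nothing protects them globally. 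Concretely, take the arcs $ab$, $xa$, $by$, $wy$ with base colours $1,1,1,2$, together with a disjoint copy of $K_4^*$ (so the underlying graph is $4$-chromatic, $\lceil\log k\rceil=2$, $\Delta^*(D)=3$, and the base colouring extends to an optimal proper arc-colouring of $D$). With the vertex colouring $a\mapsto 01$, $b\mapsto 00$, $x\mapsto 11$, $y\mapsto 10$, $w\mapsto 00$ and the usual bit-splitting, the edge $ab$ lies in one bipartite part and $xa,by,wy$ in the other. Inside each part, $a$ and $b$ are distinguished from their neighbours by their within-part degree pairs alone, and the (possibly empty) maximal matchings of your procedure can be chosen to avoid every arc incident with $a$ or $b$ (for the second part pick the matching containing $wy$ rather than $by$). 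The output then still has $S^+_\gamma(a)=S^-_\gamma(a)=S^+_\gamma(b)=S^-_\gamma(b)=\{1\}$, so the arc $ab$ is not distinguished: your construction, as described, does not always produce an nd-arc-colouring.

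The statement survives, and the repair stays inside your framework: every arc must receive a witness among the two colours \emph{private to its own part}, since only those are immune to what the other parts do. For each part $j$ with bipartition $(X_j,Y_j)$, let $M_1^j$ be a maximal matching among \emph{all} arcs of $D_j$ from $X_j$ to $Y_j$ and $M_2^j$ a maximal matching among \emph{all} arcs of $D_j$ from $Y_j$ to $X_j$, and recolour them with the two fresh colours of part $j$. By maximality, every arc $uv$ of $D_j$ has an endpoint covered by the matching of its direction, and since that fresh colour occurs in $D$ only on arcs of $D_j$ crossing $(X_j,Y_j)$ in that one direction, it belongs to exactly one of $S^+_\gamma(u),S^+_\gamma(v)$ or to exactly one of $S^-_\gamma(u),S^-_\gamma(v)$; properness is preserved because each new colour class is a matching. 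This matches the spirit of the paper's (very terse) proof, which inserts the two matchings of each part into a single colouring of the whole digraph $D$ rather than certifying arcs inside their own part; the essential point your write-up misses is that no arc's distinction may be left resting on the shared base colours.
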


\begin{proof}
Starting from an optimal arc-colouring of $D$ with $\Delta^*(D)$ colours, it suffices to use two new colours for each 
of the $\lceil \log k\rceil$ bipartite parts (obtained from any optimal vertex-colouring of the
underlying graph of $D$), as shown in the proof of Theorem~\ref{th:bipartite}, in order
to get an nd-arc-colouring of $D$.
\end{proof}

\section{Discussion}
\label{sec:discussion}

In this note, we have introduced and studied a new version of neighbour-distingui\-shing 
arc-colourings of digraphs. 
Pursuing this line of research, we propose the following questions. 

\begin{enumerate}
\item Is there any general upper bound on the neighbour-distinguishing index of symmetric digraphs?
\item Is there any general upper bound on the neighbour-distinguishing index of 
not necessarily symmetric complete digraphs?
\item Is there any general upper bound on the neighbour-distinguishing index of directed acyclic graphs?
\item The general bound given in Corollary~\ref{cor:general-bound} is certainly
not optimal. 
In particular,
is it possible to improve this bound for digraphs whose underlying graph is 3-colourable?
\end{enumerate}

We finally propose the following conjecture.

\begin{conjecture}
For every digraph $D$, $\NDI(D)\le\Delta^*(D)+1$.
\end{conjecture}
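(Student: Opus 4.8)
The plan is to reduce to regular digraphs and then reinterpret the problem in terms of \emph{missing colours}. First I would embed $D$ into a $\Delta^*(D)$-regular digraph $D'$ (adding arcs, and dummy vertices to absorb the in/out imbalance) as in the proof of Theorem~\ref{th:general-bound}; the first point to settle is that it suffices to find an nd-arc-colouring of $D'$ with $\Delta^*(D')+1=\Delta^*(D)+1$ colours that \emph{restricts} to an nd-arc-colouring of $D$. Unlike in Theorem~\ref{th:general-bound}, a single colour of slack is too little to make this automatic, so one must either add the arcs to $D$ carefully (keeping every vertex of $D$ with its original incoming and outgoing colour sets) or drop the reduction and argue directly from an optimal proper $(\Delta^*(D)+1)$-arc-colouring of $D$, noting that only arcs $uv$ with $d_D^+(u)=d_D^+(v)$ and $d_D^-(u)=d_D^-(v)$ can create a conflict.

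Assume then $D'$ is $k$-regular, $k=\Delta^*(D)$, and fix a proper $(k+1)$-arc-colouring. Each vertex $v$ then misses exactly one colour $m^-(v)$ among its incoming arcs and one colour $m^+(v)$ among its outgoing arcs, and two adjacent vertices $u,v$ are distinguished if and only if $(m^-(u),m^+(u))\neq(m^-(v),m^+(v))$. So the goal becomes: construct a proper $(k+1)$-arc-colouring of $D'$ whose induced map $v\mapsto(m^-(v),m^+(v))$ is a proper colouring of $\UND(D')$. A first attempt, generalising the perfect-matching colouring used for $K_n^*$, is to prescribe $m^-(v)=m^+(v)=c(v)$ for a fixed proper vertex-colouring $c$ of $\UND(D')$; but $\chi(\UND(D'))$ may exceed $k+1$ (already for $K_5^*$), so one is forced to use genuinely two-dimensional signatures $(m^-(v),m^+(v))$, of which there are up to $(k+1)^2$. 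Prescribing the target missing pairs and realising them by an actual arc-colouring is, after splitting each vertex of $D'$ into a head-copy and a tail-copy, a list-edge-colouring problem on a $k$-regular bipartite graph with lists of size roughly $k-1$ — just short of Galvin's theorem — so the real work is to choose the target pairs with enough room to spare (for instance keeping $m^-=m^+$ on a large independent set and engaging the second coordinate only where $\UND(D')$ is locally dense). I expect this to be the main obstacle.

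Since pinning down the precise constant $+1$ is likely beyond a short argument, a sensible intermediate goal is $\NDI(D)\le\Delta^*(D)+C$ for an absolute constant $C$, obtained from a random $(k+C)$-arc-colouring together with the Lov\'asz Local Lemma (with a two-phase refinement in the spirit of Hatami's theorem for the undirected neighbour-distinguishing index, to handle arcs whose endpoints have nearly equal degrees), or alternatively from a discharging analysis of a minimal counterexample. The obstruction to reaching $+1$ this way is that the conjectured bound is tight — it is attained by $K_n^*$ — so no probabilistic slack is available; a complete proof would presumably need an explicit construction extending the matching-based colouring of $K_n^*$ to arbitrary digraphs, or an algebraic non-vanishing argument treating the arc colours as variables, supplemented by a finite check of small sporadic digraphs (the directed analogues of the undirected exceptional graph $C_5$). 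Given that even the companion undirected conjecture $\chi'_a(G)\le\Delta(G)+2$ is still open in general, a full proof should be expected to be genuinely hard.
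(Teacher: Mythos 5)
The statement you are addressing is not a theorem of the paper but its concluding \emph{conjecture}: the authors prove only $\NDI(D)\le 2\Delta^*(D)$ in general, $\Delta^*(D)+2$ for bipartite digraphs, and $\Delta^*(D)+1$ for trees and for $K_n^*$, and they explicitly leave the bound $\Delta^*(D)+1$ open. Your text, read on its own terms, is a research plan rather than a proof, and you say so yourself; as a proof attempt it has concrete gaps at every stage. First, the reduction to a $\Delta^*(D)$-regular supergraph $D'$ is not available with only one colour of slack: restricting an nd-arc-colouring of $D'$ to $D$ shrinks the sets $S^{\pm}$, and unlike in the proof of Theorem~\ref{th:general-bound} (where the $1$-factor structure gives a local argument for why conflicts cannot reappear), nothing here guarantees that vertices distinguished in $D'$ remain distinguished in $D$; you flag this but do not resolve it. Second, your reformulation for $k$-regular digraphs — that under a proper $(k+1)$-arc-colouring each vertex misses exactly one incoming and one outgoing colour, and adjacent vertices are distinguished iff their missing pairs $(m^-,m^+)$ differ — is correct and is a genuinely useful way to see the problem (it is the mechanism behind the paper's colouring of $K_n^*$, where $m^-(v_i)=m^+(v_i)=i$), but you then stop exactly at the crux: you neither exhibit an admissible assignment of target pairs nor a way to realise one, and you note yourself that the natural list-edge-colouring instance has lists of size about $k-1$ on a $k$-regular bipartite graph, which falls below the threshold where Galvin's theorem applies. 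Third, the probabilistic/Local Lemma route you propose can at best give $\Delta^*(D)+C$ for some constant $C$, which is a different (and also unproven here) statement, not the conjectured $+1$, and as you observe the tightness at $K_n^*$ leaves no slack for such methods.

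In short, there is no comparison to be made with ``the paper's own proof'' because none exists, and your proposal does not close the gap either: the missing ingredient is precisely the construction (or existence argument) for a proper $(\Delta^*+1)$-arc-colouring whose induced missing-pair map is a proper colouring of the underlying graph, together with a sound way to pass from $D$ to a regular extension without destroying the distinguishing property upon restriction. What you have written would be appropriate as a discussion of approaches to the conjecture, and the missing-pair viewpoint could be worth developing (for instance, it immediately recovers the paper's result for $K_n^*$ and suggests trying symmetric digraphs first, in line with the paper's Question~1), but it should not be presented as a proof of the conjecture.
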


\end{document}